\newtheorem{theorem}{Theorem}
\newtheorem{lemma}{Lemma}
\newtheorem{definition}{Definition}
\newenvironment{proof}{\noindent{\bf Proof.}}{\hfill$\Box$} 
\algrenewcommand{\ALG@beginalgorithmic}{\small}
\algrenewcommand\alglinenumber[1]{\small #1:}
\providecommand{\tup}[1]{%
    \relax\ifmmode
      \langle #1 \rangle%
    \else
        $\langle$#1$\rangle$%
    \fi
}
\newcommand{\act}[1]{%
    \relax\ifmmode
        \mathord{\mathcode`\-="702D\sf #1\mathcode`\-="2200}%
    \else
        $\mathord{\mathcode`\-="702D\sf #1\mathcode`\-="2200}$%
    \fi
}
\newcommand{\remove}[1]{}
\def\mainlistofsymbols{
  \normalsize
  \vspace*{1.5 em}
  \@starttoc{los}
}
\def\partonelistofsymbols{
  \normalsize
  \vspace*{1.5 em}
  \@starttoc{p1los}
}
\def\parttwolistofsymbols{
  \normalsize
  \vspace*{1.5 em}
  \@starttoc{p2los}
}
\def\l@symbol#1#2{\addpenalty{-\@highpenalty} \vskip 4pt plus 2pt
{\@dottedtocline{0}{0em}{8em}{#1}{#2}}}
\newcommand{\newhiddensym}[2]{%
}
\newcommand{\algIOA}[2]{\ifmmode{\text{#1}_{#2}}\else{$\text{#1}_{#2}$}\fi}
\newcommand{\EX}{\ifmmode{\xi}\else{$\xi$}\fi}
\newcommand{\EXF}{\ifmmode{\phi}\else{$\phi$}\fi}
\newcommand{\hist}[1]{H_{#1}}
\newcommand{\obj}[1]{O_{#1}}
\newcommand{\inter}[1]{
	\ifmmode{\left(\bigcap_{\mathcal{Q}\in#1}\mathcal{Q}\right)}
	\else{$\left(\bigcap_{\mathcal{Q}\in#1}\mathcal{Q}\right)$}
	\fi
}
\newcommand{\ledger}{\mathcal{L}}
\newcommand{\op}{\pi}
\mathchardef\mhyphen="2D
\newcommand{\pr}{p}
\newcommand{\rdr}{r}
\newcommand{\bef}{\rightarrow}
\newcommand{\vid}[1]{\ifmmode{\nu_{#1}}\else{$\nu_{#1}$}\fi}
\newcommand{\seen}{\ifmmode{seen}\else{$seen$}\fi}
\newcommand{\maxts}[1]{\ifmmode{maxTS_{#1}}\else{$maxTS_{#1}$}\fi}
\newcommand{\maxtag}[1]{\ifmmode{maxTag_{#1}}\else{$maxTag_{#1}$}\fi}
\newcommand{\maxpair}[1]{\ifmmode{maxMPair_{#1}}\else{$maxMPair_{#1}$}\fi}
\newcommand{\mintag}[1]{\ifmmode{minTag_{#1}}\else{$minTag_{#1}$}\fi}
\newcommand{\maxps}{\ifmmode{maxPS}\else{$maxPS$}\fi}
\newcommand{\conftg}[1]{\ifmmode{confirmed_{#1}}\else{$confirmed_{#1}$}\fi}
\newcommand{\maxconftag}{\ifmmode{\ms{maxCT}}\else{$maxCT$}\fi}
\newcommand{\SoS}{\mathcal{S}}
\newcommand{\vledger}{\mathcal{VL}}
\newcommand{\extends}{\Vert}
\newcommand{\nn}[1]{#1}
\begin{document}

\title{
Formalizing and Implementing\\ Distributed Ledger Objects\thanks{A preliminary version of this work appears in the Proceedings of the 6th International Conference on Networked Systems (NETYS 2018).}
}
\author{
 Antonio Fern\'andez Anta  
\thanks{IMDEA Networks Institute, 
	Madrid, Spain, 
	\texttt{antonio.fernandez@imdea.org}}
\and Chryssis Georgiou 
\thanks{Dept. of Computer Science, University of Cyprus,
  Nicosia, Cyprus, 
 \texttt{chryssis@cs.ucy.ac.cy}}
\and Kishori  Konwar 
\thanks{Computer Science and Artificial Intelligence Laboratory, MIT, Cambridge, USA, \texttt{kishori@csail.mit.edu}}
\and Nicolas Nicolaou 
\thanks{Algolysis Ltd. \& KIOS Research and Innovation CoE, University of Cyprus, Cyprus, \texttt{nicolasn@ucy.ac.cy}}
}


%


\maketitle 

\setcounter{footnote}{0}

\begin{abstract}
	Despite the hype about blockchains and distributed ledgers, formal abstractions of these objects are scarce\footnote{This observation was also pointed out by Maurice Herlihy in his PODC2017 keynote talk.}. 
	To face this issue,
	in this paper we provide a proper formulation of a \emph{distributed ledger object}. In brief,
	we define a \emph{ledger} object as a sequence of \emph{records}, and we provide the operations and the properties that such an object should support. Implementation of a ledger object 
	on top of multiple (possibly geographically dispersed) computing devices gives rise to the \emph{distributed ledger object}. 
	In contrast to the centralized object, distribution allows operations to be applied concurrently 
	on the ledger, introducing challenges on the \emph{consistency} of the ledger in each participant. 
	We provide the definitions of three well known consistency guarantees in terms of the operations 
	supported by the ledger object:\break (1) \emph{atomic consistency (linearizability)}, (2) \emph{sequential consistency}, and (3) \emph{eventual consistency}. We then provide
	implementations of distributed ledgers on asynchronous message passing crash-prone systems using an Atomic Broadcast service, and show that they provide eventual, sequential or atomic consistency semantics respectively. 
	We conclude with a variation of the ledger --
	the \emph{validated ledger} -- which requires that each record in the ledger satisfies a particular 
	\textit{validation rule}. 
\end{abstract}

\section{Introduction}
\label{sec:Intro}
We are living a huge hype of the so-called crypto-currrencies, and their technological support, the blockchain \cite{N08bitcoin}. It is claimed that using crypto-currencies and public distributed ledgers (i.e., public blockchains) will liberate stakeholder owners from centralized trusted authorities \cite{NYT-blockchain}.
Moreover, it is believed that there is the opportunity of becoming rich by mining coins, speculating with them, or even launching your own coin 
(i.e., with an initial coin offering, ICO).

Cryptocurrencies were first introduced in 2009 by Satoshi Nakamoto \cite{N08bitcoin}. 
In his paper, Nakamoto introduced the first algorithm 
that allowed economic transactions to be accomplished between peers without the need of 
a central authority. An initial analysis of the security of the protocol was presented in  \cite{N08bitcoin},
although a more formal and thorough analysis was developed by Garay, Kiayias, and Leonardos in \cite{GKL15}. 
In that paper the authors define and prove two fundamental properties of the blockchain implementation 
behind bitcoin: $(i)$ \textit{common-prefix}, and $(ii)$ \textit{quality of chain}.

Although the recent popularity of distributed ledger technology (DLT), or blockchain, is primarily due to the explosive growth of  numerous crypocurrencies, there are many applications of this core technology that are outside the financial industry.  These applications arise from leveraging various useful features provided by distributed ledgers such as a  decentralized information management,  immutable record keeping for possible audit trail, a robustness and available system,  and  a system that provides security and privacy.  For example, one rapidly emerging area of application of distributed ledger technology is medical and health care. At a high level, the distributed ledger can be used as a platform to store health care data for sharing, recording, analysis, research, etc. One of the most widely discussed approaches in adopting DLT is to implement Health Information Exchange (HIE), for 
sharing information among the participants such as patients, caregivers and other relevant parties~\cite{DLT:health:2017}. Another interesting open-source initiative is Namecoin  that uses  DLT to improve the registration and ownership transfer of internet components such as DNS~\cite{Namecoin}. Recently, in the real estate space there has been some experimental study to use DLT to implement a transparent and  decentralized public ledger for managing  land information, where the land registry serves  property rights and historical transactions.  Moreover, such an application would benefit from: $(i)$ time stamping of transactions (a substitute for notarization), $(ii)$ fault-tolerance  from individual  component crashes (as the system does not rely on a single centralized system), and  $(iii)$ non temper-able  
historical transactions and 
registry details~\cite{DLT:RealEstate:2017}. Another example is to apply DLT in the management of scientific research projects to track and manage information such as publications, funding, and analysis in a publicly available, in a reproducible and transparent manner~\cite{DLT:Science}.

In the light of these works 
indeed crypto-currencies and (public and private) distributed ledgers\footnote{We will use distributed ledger from now on, instead of blockchain.} 
have the potential to impact our society deeply. 
However, most experts often do not clearly differentiate between the coin, the ledger that supports it, and the service they provide.
Instead, they get very technical, talking about the cryptography involved, the mining used to maintain the ledger, or the smart contract technology used. 
Moreover, when asked for details it is often the case that there is no formal specification of the protocols, algorithms, and service provided, with a few exceptions \cite{wood2014ethereum}. In many cases ``the code is the spec.''

From the theoretical point of view there are many fundamental questions with the current distributed ledger 
(and crypto-currency) systems that are very often not properly answered:
What is the service that must be provided by a distributed ledger?
What properties a distributed ledger must satisfy?
What are the assumptions made by the protocols and algorithms on the underlying system?
Does a distributed ledger require a linked crypto-currency?
%
	In his PODC'2017 keynote address, Maurice Herlihy pointed out that, despite the hype about
	blockchains and distributed ledgers, no formal abstraction of these objects has been proposed \cite{DBLP:conf/podc/Herlihy17}. He stated that there is a need for the formalization of the distributed systems that are at the heart of most cryptocurrency implementations, and leverage the decades of experience in the distributed computing community in formal specification when designing and proving various properties of such systems. In particular, he noted that the distributed ledger can be formally described by its sequential specification. Then, by using well-known concurrent objects, like consensus objects, come up with a universal construction of linearizable distributed ledgers.

\begin{figure}[t]
	\begin{multicols}{2}
		\begin{algorithm}[H]
			\caption{\small Ledger Object $\ledger$}
			\label{alg:cdl}
			\begin{algorithmic}[1]
				\State \textbf{Init:} $S \leftarrow \emptyset$
				\Function{$\ledger$.\act{get}}{~}
				\State \textbf{return} $S$ 
				\EndFunction
				\Function{$\ledger$.\act{append}}{r}
				\State $S \leftarrow S \extends r$
				\State \textbf{return}\vspace{-.8em}
				\EndFunction
				\Statex
			\end{algorithmic}
		\end{algorithm}
		\begin{algorithm}[H]
			\caption{\small Validated Ledger Object $\vledger$} 
			\label{alg:vcdl}
			\begin{algorithmic}[1]
				\State \textbf{Init:} $S \leftarrow \emptyset$
				\Function{$\vledger$.\act{get}}{~}
				\State \textbf{return} $S$ 
				\EndFunction
				\Function{$\vledger$.\act{append}}{r}
				\If {$\mathit{Valid}(S \extends r)$}
				\State $S \leftarrow S \extends r$
				\State \textbf{return {\sc ack}}
				\Else ~~\textbf{return {\sc nack}}
				\EndIf 
				\EndFunction
			\end{algorithmic}
		\end{algorithm}
	\end{multicols}\vspace{-\bigskipamount}
\end{figure}

In this paper we provide a proper formulation of a family of \emph{ledger objects},
starting from a centralized, non replicated ledger object, and moving to distributed, concurrent 
implementations of ledger objects, subject to validation rules. 
In particular, we provide definitions and sample implementations for the following types of ledger objects:

\begin{itemize}
	\item{\bf Ledger Object (LO):} We begin with a formal definition of a \emph{ledger object}
	as a sequence of \emph{records}, supporting two basic operations: $\act{get}$ and  $\act{append}$. 
	In brief, the ledger object is captured by Code~\ref{alg:cdl} (in which $\extends$ is the concatenation operator), where the $\act{get}$ operation returns the 
	ledger as a sequence $S$ of records, and the $\act{append}$ operation inserts a new 
	record at the end of the sequence. 
	The \textit{sequential specification} of the object is then presented, to explicitly define the 
	expected behavior of the object when accessed sequentially by $\act{get}$ and $\act{append}$ operations. 
	
	\item{\bf Distributed Ledger Object (DLO):} With the ledger object implemented on top of multiple 
	(possibly geographically dispersed) \emph{computing devices} or \emph{servers} we
	obtain \emph{distributed ledgers} -- the main focus of this paper. Distribution allows a (potentially very large)
	set of distributed \emph{client processes} to access the distributed ledger, by issuing $\act{get}$ and $\act{append}$
	operations concurrently. To explain the behavior of the operations during concurrency we define 
	three consistency semantics: $(i)$ eventual consistency, $(ii)$ sequential consistency, and $(iii)$ atomic consistency. The definitions provided are
	independent of the properties of the underlying system and the failure model.
	
	\item {\bf Implementations of DLO:}
	In light of our semantic definitions, we provide a number of algorithms that implement DLO that satisfy
	the above mentioned consistency semantics, on asynchronous message passing crash-prone systems, utilizing an Atomic Broadcast service. 
	
	\item{\bf Validated (Distributed) Ledger Object (V[D]LO):} We then provide a variation of the ledger object -- the \emph{validated ledger object} -- 
	which requires that each record in the ledger satisfies a particular \textit{validation rule}, 
	expressed as a predicate $\mathit{Valid}()$. 
	To this end, the basic $\act{append}$ operation of this type of ledger filters each record 
	through the $\mathit{Valid}()$ predicate before is appended to the ledger (see Code~\ref{alg:vcdl}).
\end{itemize}

\noindent{\bf Other related work.}
A distributed ledger can be used to implement a replicated state machine~\cite{lamport1978time,schneider1990implementing}. Paxos~\cite{L98}
is one the first proposals of a replicated state machine implemented with repeated consensus instances. The Practical Byzantine Fault Tolerance solution of Castro and Liskov~\cite{castro2002practical} is proposed to be used in Byzantine-tolerant blockchains. In fact, it is used by them to implement an asynchronous replicated state machine~\cite{castro2000proactive}. The recent work of Abraham and Malkhi~\cite{abraham2017blockchain} discusses in depth the relation  between BFT protocols and  blockchains consensus protocols. 
All these suggest that at the heart of implementing a distributed ledger object there is a version of a consensus mechanism, which directly impacts the efficiency of the 
implemented DLO. In a later section, we show that an eventual consistent DLO can be used to implement consensus, and consensus can be used to implement an
\nn{eventual consistent} DLO; this reinforces the relationship identified in the above-mentioned works.

Among the proposals for distributed ledgers, Algorand~\cite{gilad2017algorand} is an algorithm for blockchain that boasts much higher throughput than Bitcoin and Ethereum.
This work is a new resilient optimal Byzantine consensus algorithm targeting consortium blockchains. To this end, it first revisits the consensus validity property by requiring that the decided value
satisfies a predefined predicate, which does not systematically exclude a value proposed only by Byzantine processes, thereby generalizing the validity properties found in the literature. 
Gramoli et al.~\cite{gramoli2017blockchain, DBLP:journals/corr/CrainGLR17} propose blockchains implemented using Byzantine consensus algorithms that also relax the validity property of the commonly defined consensus problem.
In fact, this generalization of the valid consensus values was already introduced by Cachin et al.~\cite{DBLP:conf/crypto/CachinKPS01} as \emph{external validity}.

On the most recent horizon, Linux Foundation initiated the project \emph{Hyperledger} \cite{Hyperledger}.
 Their focus is on developing a modular architectural framework for enterprise-class distributed ledgers. This includes identifying common and
critical components, providing a functional decomposition of an enterprise blockchain stack
into component layers and modules, standardizing interfaces between the components,
and ensuring interoperability between ledgers. The project currently encapsulates five different distributed ledger 
implementations (with coding names Burrow, Fabric, Iroha, Sawtooth, and Indy), each targeting a separate goal. 
It is interesting that Fabric~\cite{DBLP:conf/eurosys/AndroulakiBBCCC18}, \nn{although quite different from our work since its objective is to provide an open-source software system, does have} some similarities, like the fact that they use
an atomic broadcast service, the implementation of some level of consistency (in fact, eventual consistency), and 
the fact that it allows the insertion of invalid transactions in the ledger that are filtered out at a later time (as we do in Section~\ref{sec:Valid}).
\nn{However, Fabric does not provide means to achieve stronger consistency, like linearizability or sequential consistency.}
A number of tools are also under development that will allow interaction with the distributed ledgers. Of interest
is the use of tunable, external consensus algorithms by the various distributed ledgers.

One of the closest works to ours is the one by Anceaume et al~\cite{ALPTSSS17}, which like our work, attempts to connect the concept of distributed ledgers with distributed objects,
although they concentrate in Bitcoin. In particular, they first show that read-write registers do not capture Bitcoin's behavior. To this end, they introduce the Distributed Ledger Register (DLR), a register that builds on read-write registers for mimicking the behavior of Bitcoin. In fact, they show the conditions under which the Bitcoin blockchain algorithm satisfies the DLR properties. 
Our work, although it shares the same spirit of formulating and connecting ledgers with concurrent objects (in the spirit of \cite{NFG16}), it differs in many aspects. For example, our formulation does not focus on a specific blockchain (such as Bitcoin), but aims to be more general, and beyond crypto-currencies. Hence, for example, instead of using sequences of blocks (as in~\cite{ALPTSSS17}) we talk about sequences of records. Furthermore, following the concurrent object literature, we define the ledger object on new primitives ($\act{get}$ and $\act{append}$), instead on building on multi-writer, multi-reader R/W register primitives. We pay particular attention on formulating the consistency semantics of the distributed ledger object and demonstrate their versatility by presenting implementations. 
Nevertheless, both works, although taking different approaches, contribute to the better understanding of the basic underlying principles of distributed ledgers from
the theoretical distributed computing point of view.

\section{The Ledger Object}
\label{sec:Spec}
In this section we provide the fundamental definition of a concurrent 
ledger object. 

\subsection{Concurrent Objects and the Ledger Object} 
An \textit{object type} $T$ specifies $(i)$ the set of \emph{values} (or states) that any object $\obj{}$ of type $T$ can take, and
$(ii)$ the set of \textit{operations} that a process can use to modify or access the value of $\obj{}$.
An object $\obj{}$ of type $T$ is a \emph{concurrent object} if it is a shared object accessed by multiple processes~\cite{Raynal13}.
Each operation on an object $\obj{}$ consists of an \emph{invocation} event
and a \emph{response} event, that must occur in this order.
A \emph{history} of operations on $\obj{}$, denoted by $\hist{\obj{}}$, 
is a sequence of invocation and response events, starting with an invocation event. 
(The sequence order of a history reflects the real time ordering of the events.)
An operation $\op$ is \emph{complete} in a history $\hist{\obj{}}$, if $\hist{\obj{}}$ 
contains both the invocation and the matching response of $\op$, in this order. 
A history $\hist{\obj{}}$ is {\em complete} if it contains only complete operations; otherwise it is {\em partial}~\cite{Raynal13}.
An operation 
$\op_1$ \emph{precedes} an operation $\op_2$ (or $\op_2$ \emph{succeeds} $\op_1$), denoted by $\op_1\bef\op_2$, 
in $\hist{\obj{}}$, if the response event of $\op_1$ appears before the invocation event 
of $\op_2$ in $\hist{\obj{}}$. Two operations are \emph{concurrent} if none precedes the other. 

A complete history $\hist{\obj{}}$ is \emph{sequential} 
if it contains no concurrent operations,
i.e., it is an alternative sequence of matching invocation and response events, starting with an invocation and ending with a response event.
A partial history is sequential, if removing its last event (that must be an invocation) makes it a complete sequential history.
A \emph{sequential specification} of an object $\obj{}$, 
describes the behavior of $\obj{}$ when accessed sequentially. In particular, the sequential specification of $\obj{}$ is the set of all possible sequential histories involving solely object $\obj{}$~\cite{Raynal13}.


A \emph{ledger} $\ledger$ is a concurrent object that stores a totally ordered sequence $\ledger.S$ of \emph{records} and supports two operations (available to any process $\pr$):
(i) $\ledger.\act{get}_\pr()$, and (ii) $\ledger.\act{append}_\pr(\rdr)$.
A \emph{record} is a triple $\rdr=\tup{\tau, \pr, v}$, where $\tau$ is a {\em unique} record identifier from a set ${\mathcal T}$,
$\pr\in {\cal P}$ is the identifier of the process that created record $\rdr$, 
and $v$ is the data of the record drawn from an alphabet $A$. 
We will use $\rdr.p$ to denote the id of the process that created record $\rdr$; similarly we define $\rdr.\tau$ and $\rdr.v$.
A process $\pr$ invokes an $\ledger.\act{get}_\pr()$ operation\footnote{We define only one operation to access the value of the ledger for simplicity. In practice, other operations, like those to access individual records in the sequence, will also be available.} to obtain the sequence $\ledger.S$ of records stored in the ledger object $\ledger$, and $\pr$ invokes an $\ledger.\act{append}_\pr(\rdr)$ operation to extend $\ledger.S$ with a new record $r$.
Initially, the sequence $\ledger.S$ is empty. 


\begin{definition}
\label{def:sspec}
	The \emph{sequential specification} of a ledger $\ledger$ over the sequential history $\hist{\ledger}$ is defined as follows. The value of the sequence $\ledger.S$ of the ledger is initially the empty sequence.
	If at the invocation event of an operation $\op$ in $\hist{\ledger}$ the value of the sequence in ledger $\ledger$ is $\ledger.S=V$, then: 
	\begin{enumerate}
		\item  if $\op$ is an $\ledger.\act{get}_\pr()$ operation, then the response event of $\op$ returns $V$, and
		\item if $\op$ is an $\ledger.\act{append}_\pr(\rdr)$ operation, 
		then at the response event of $\op$, the value of the sequence in ledger $\ledger$ is $\ledger.S=V\extends r$ (where $\extends$ is the concatenation operator).
	\end{enumerate}
\end{definition}
%

\subsection{Implementation of Ledgers}

Processes execute operations and instructions sequentially (i.e., we make the usual well-formedess assumption 
	where a process invokes one operation at a time).
A process $\pr$ interacts with a ledger $\ledger$ by invoking an operation ($\ledger.\act{get}_\pr()$ or $\ledger.\act{append}_\pr(\rdr)$), which causes a request to be sent to the ledger $\ledger$, and a response to be sent from $\ledger$ to $\pr$. The response marks the end of an operation and also carries the result of that operation\footnote{We make explicit the exchange of request and responses between the process and the ledger to reveal the fact that the ledger is concurrent, i.e., accessed by several processes.}. 
The result for a $\act{get}$ operation is a sequence of records, while the result for an $\act{append}$ operation is a confirmation ({\sc ack}). This interaction from the point of view of the process $p$ is depicted in Code \ref{code:interface}. A possible centralized implementation of the ledger that processes requests sequentially is presented in Code \ref{alg:dl} (each block \textbf{receive} is assumed to be executed in mutual exclusion). Figure~\ref{fig:ledger}(left) abstracts the interaction between the processes and the ledger.

\begin{figure}[t]
	\centering
	\includegraphics[width=3.3in]{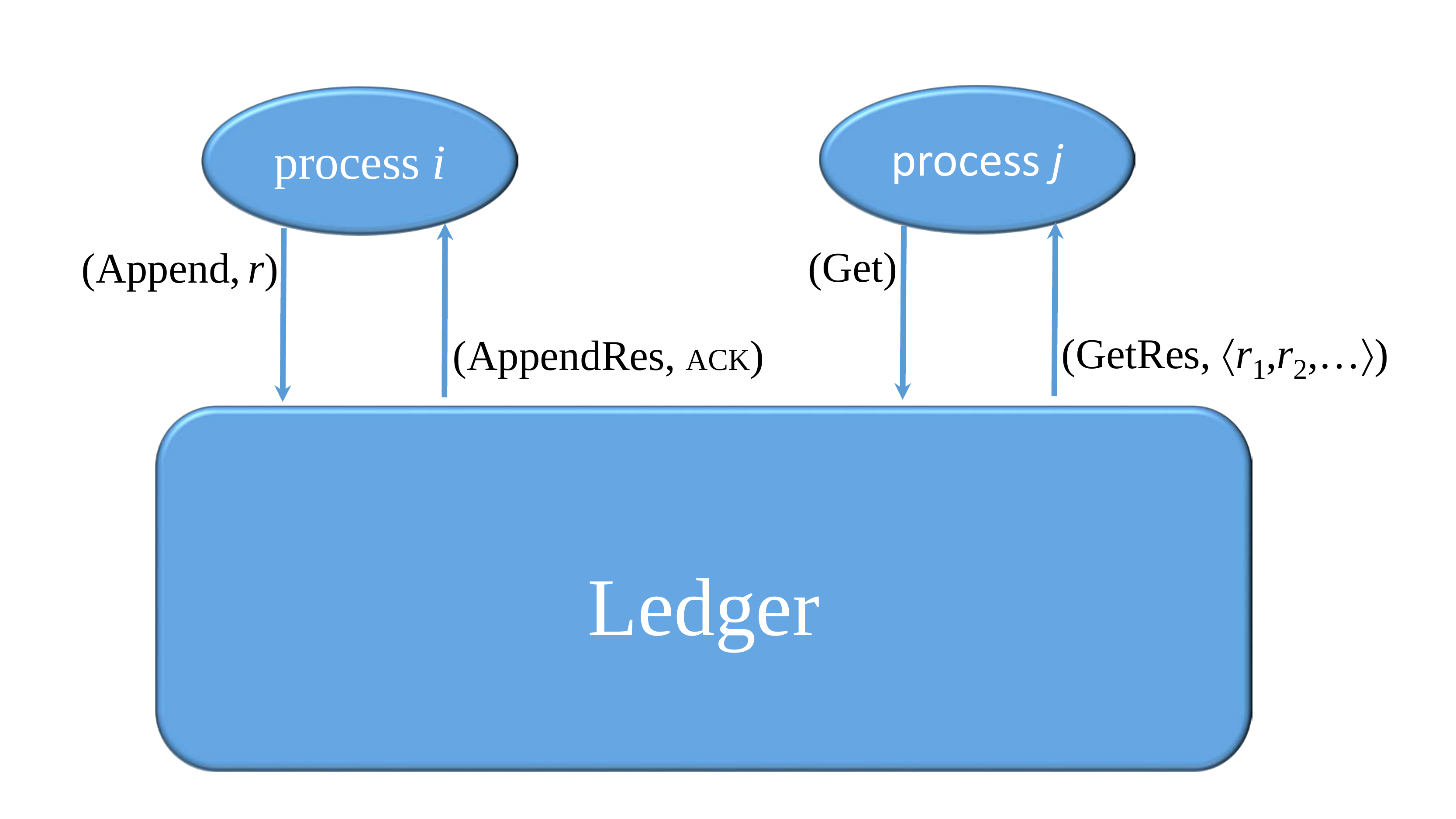}~
	\includegraphics[width=3.3in]{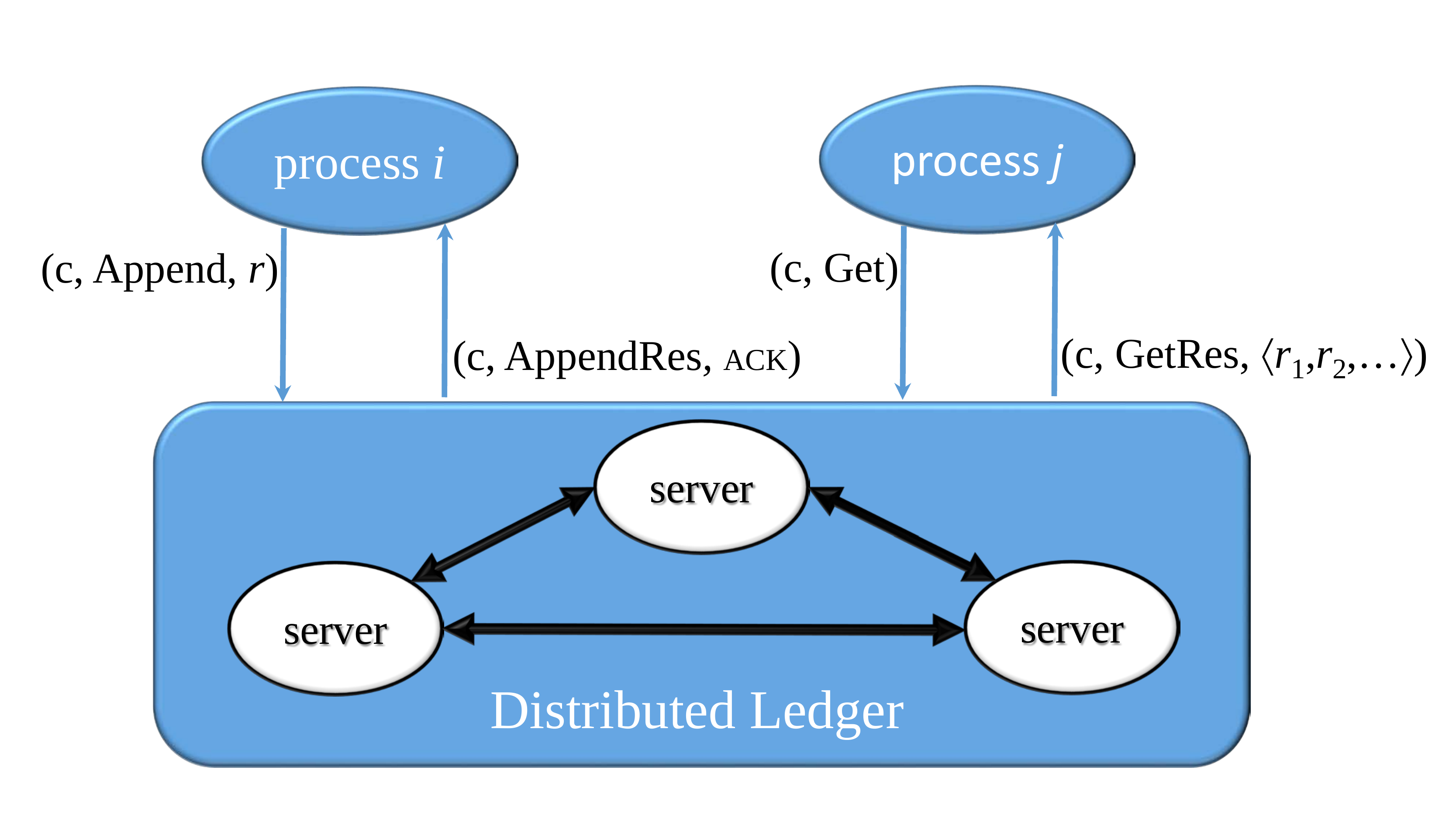}\vspace{-.5em}
	\caption{The interaction between processes and the ledger, where $r,r_1,r_2,\ldots$ are records.\break {\bf Left}: General abstraction; {\bf Right}: Distributed ledger implemented by servers}
	\label{fig:ledger}
\end{figure}

\setlength{\columnsep}{10pt}
\begin{figure}[t]
	\begin{multicols}{2}
		\begin{algorithm}[H]
			\caption{\small External Interface (Executed by a Process $p$) of a Ledger Object $\ledger$}
			\label{code:interface}
			\begin{algorithmic}[1]
				%
				\Function{$\ledger.\act{get}$}{~} 
				\State {\bf send} request ({\sc get}) to ledger $\ledger$
				\State \textbf{wait} response ({\sc getRes}, $V$) from $\ledger$
				\State \textbf{return} $V$ 
				\EndFunction
				\Function{$\ledger.\act{append}$}{$r$}
				\State \textbf{send} request ({\sc append}, $r$) to ledger $\ledger$
				\State \textbf{wait} response ({\sc appendRes}, $res$) from $\ledger$
				\State \textbf{return} $res$
				\EndFunction
			\end{algorithmic}
		\end{algorithm}
		
		\begin{algorithm}[H]
			\caption{\small Ledger $\ledger$ (centralized)}
			\label{alg:dl}
			\begin{algorithmic}[1]
				\State \textbf{Init:} $S \leftarrow \emptyset$
				\Receive{{\sc get}}
				\State \textbf{send} response ({\sc getRes}, $S$) to $p$
				\EndReceive
				\Receive{{\sc append}, $r$} 
				\State $S \leftarrow S \extends r$
				\State \textbf{send} resp ({\sc appendRes}, {\sc ack}) to $p$
				\EndReceive
			\end{algorithmic}
		\end{algorithm}
	\end{multicols}\vspace{-1em}
\end{figure}

\section{Distributed Ledger Objects}
\label{sec:DL}
In this section we define distributed ledger objects, and some of the levels of consistency guarantees that can be provided. These definitions are general and do not rely on the
	properties of the underlying distributed system, unless otherwise stated. In particular, they do not make any assumption on the types of failures that may occur.
Then, we show how to implement distributed ledger objects that satisfy these consistency levels using an atomic broadcast \cite{DBLP:journals/csur/DefagoSU04} service
on an asynchronous system with \emph{crash} failures.




\subsection{Distributed Ledgers and Consistency}
\label{sec:dledger}

\subsubsection{Distributed Ledgers}

A \emph{distributed ledger object} (distributed ledger for short) is a concurrent
ledger object
that is implemented in a distributed manner. In particular, the ledger object is \textit{implemented} by (and possibly replicated among) a set of 
(possibly distinct and geographically dispersed) computing devices, that we refer as \emph{servers}. We refer to the processes that invoke the $\act{get}()$ and $\act{append}()$ operations of the distributed ledger as {\em clients}. Figure~\ref{fig:ledger}(right) depicts the interaction between the clients and the distributed ledger, implemented by servers.

In general, servers can fail. This leads to introducing mechanisms in the algorithm that implements
the distributed ledger to achieve fault tolerance, like replicating the ledger. 
Additionally, the interaction of the clients with the servers
will have to take into account the faulty nature of individual servers,
as we discuss later in the section.

\subsubsection{Consistency of Distributed Ledgers}

Distribution and replication intend to ensure availability and survivability of the ledger, in case a subset of the servers fails. 
At the same time, they raise the challenge of maintaining \emph{consistency} among the different views that different clients get of the distributed ledger: 
what is the latest value of the ledger when multiple clients may send operation requests at different servers concurrently?
Consistency semantics need to be in place to precisely describe the allowed values that a \act{get()} operation may return when it is executed concurrently with other
\act{get()} or \act{append()} operations. Here, as examples, we provide the properties that operations must satisfy in order to guarantee {\em atomic consistency} (linearizability)~\cite{HW90},  \textit{sequential consistency}~\cite{LL79} and \textit{eventual consistency}~\cite{MSlevels} semantics. In a similar way, other consistency guarantees, such as session and causal consistencies could be formally defined~\cite{MSlevels}. 

Atomicity (aka, linearizability)~\cite{AW94, HW90} 
provides the illusion that the distributed ledger is accessed sequentially respecting the real time order, even when operations are invoked concurrently. I.e., the distributed ledger
seems to be a centralized 
ledger like the one implemented by Code \ref{alg:dl}.
Formally~\!\footnote{Our formal definitions of linearizability and sequential consistency are adapted from~\cite{AW94}.}, 

\begin{definition}
	\label{def:atomic}
	A distributed ledger $\ledger$ is {\em atomic} if, given any complete history 
	$\hist{\ledger}$, there exists a permutation $\sigma$ of the operations in $\hist{\ledger}$ such that: 
	\begin{enumerate}
		\item $\sigma$ follows the sequential specification of
		$\ledger$, and 
		\item for every pair of operations $\pi_1, \pi_2$, if $\pi_1\bef \pi_2$ in $\hist{\ledger}$, then $\pi_1$ appears before $\pi_2$ in $\sigma$. 
	\end{enumerate}
\end{definition}

Sequential consistency~\cite{LL79, AW94} is weaker than atomicity in the sense that it only requires that operations respect the local ordering at each process, not the real time ordering. Formally, 

\begin{definition}
	\label{def:sc}
	A distributed ledger $\ledger$ is {\em sequentially consistent} if, given any complete history $\hist{\ledger}$, there exists a permutation $\sigma$ of the operations in $\hist{\ledger}$ such that: 
	\begin{enumerate}
		\item $\sigma$ follows the sequential specification of
		$\ledger$, and 
		\item for every pair of operations $\pi_1, \pi_2$ invoked by a process $p$, if $\pi_1\bef \pi_2$ in $\hist{\ledger}$, then $\pi_1$ appears before $\pi_2$ in $\sigma$. 
	\end{enumerate}
\end{definition}

Let us finally give a definition of eventually consistent distributed ledgers. Informally speaking, a distributed 
ledger is eventual consistent, if for every
$\act{append}(r)$ operation that completes, \textit{eventually} all $\act{get}()$ operations return sequences that
contain record $r$, and in the same position. Formally,
%
\begin{definition}
	\label{def:ec}
	A distributed ledger $\ledger$ is {\em eventually consistent} if, given any complete history $\hist{\ledger}$, there exists a permutation $\sigma$ of the operations in $\hist{\ledger}$ such that: 
	\begin{itemize}
		\item[(a)] $\sigma$ follows the sequential specification of $\ledger$, and 
		\item[(b)] there exists a complete history $\hist{\ledger}'$ that extends\footnote{A sequence $X$ 
			extends a sequence $Y$ when $Y$ is a prefix of $X$.}
		$H_{\ledger}$ such that,
		for every complete history $\hist{\ledger}''$ that extends $H'_{\ledger}$, every complete operation $\ledger.\act{get}()$ in $\hist{\ledger}'' \setminus \hist{\ledger}'$ returns 
		a sequence that contains $r$, \nn{for all $\ledger.\act{append}(r) \in \hist{\ledger}$.} 
	\end{itemize}
\end{definition}

%
%
%

\noindent
{\em Remark:} Observe that in the above definitions we consider $\hist{\ledger}$ to be complete. As argued in~\cite{Raynal13},
the definitions can be extended to sequences that are not complete by reducing the problem of determining whether a complete sequence extracted by the non complete one is consistent. That is, given a partial history $\hist{\ledger}$, if 
$\hist{\ledger}$ can be modified in such a way that every invocation of a non complete operation is either removed or completed with a response event, and the
resulting, complete, sequence $\hist{\ledger}'$ checks for consistency, then $\hist{\ledger}$ also checks for consistency. 

\subsection{Distributed Ledger Implementations in a System with Crash Failures}

In this section we provide implementations of distributed ledgers with different levels of consistency in an asynchronous distributed system with crash failures,
as a mean of illustrating the generality and versatility of our ledger formulation.
These implementations build on a generic deterministic atomic broadcast service~\cite{DBLP:journals/csur/DefagoSU04}.

\subsubsection{Distributed Setting} 
We consider an asynchronous message-passing distributed system \nn{augmented with an atomic broadcast service}. 
There is an unbounded number of clients accessing the distributed ledger. 
There is a set $\SoS$ of $n$ servers, that emulate a ledger (c.f., Code~\ref{alg:dl}) in a distributed manner.
Both clients and servers might fail by crashing. However, no more than $f < n$ of servers might 
crash\footnote{The atomic broadcast service \nn{(c.f. Section~\ref{ABS})} used in the algorithms may internally have more restrictive requirements.}.
Processes (clients and servers) interact by message passing communication over asynchronous reliable channels. 

		\begin{figure}
		\begin{algorithm}[H]
			\caption{\small External Interface of a Distributed Ledger Object $\ledger$ Executed by a Process $p$}
			\label{code:interface-distributed}
				\begin{multicols}{2}
			\begin{algorithmic}[1]
				\State $c \leftarrow 0$
				\State Let $L \subseteq \SoS : |L| \geq f+1$ 
				\Function{$\ledger.\act{get}$}{~}
				\State $c \leftarrow c + 1$
				\State {\bf send} request (c, {\sc get}) to the servers in $L$
				\State \textbf{wait} response (c, {\sc getRes}, $V$) from some $i \in L$
				\State \textbf{return} $V$
				\EndFunction
				\Function{$\ledger.\act{append}$}{r}
				\State $c \leftarrow c + 1$
				\State \textbf{send} request (c, {\sc append}, $r$) to the servers in $L$
				\State \textbf{wait} response (c, {\sc appendRes}, $res$) from some $i \in L$
				\State \textbf{return} $res$
				\EndFunction
			\end{algorithmic}
				\end{multicols}
		\end{algorithm}\vspace{-2em}
\end{figure}

We assume that clients are aware of the faulty nature of servers and know (an upper bound on) the maximum number of faulty servers $f$. Hence, we assume they use a modified version of the interface
presented in Code~\ref{code:interface} to deal with server unreliability. The new interface is presented in Code~\ref{code:interface-distributed}. 
As can be seen there, every operation request is sent to a set $L$ of at least
$f+1$ servers, to guarantee that at least one correct server receives and processes the request (if an upper bound on $f$ is not
known, then the clients contact all servers). 
Moreover, at least one such correct server will send a response which guarantees the termination of the operations. 
For formalization purposes, the first response received for an operation will be considered as the response event of the operation.
In order to differentiate from different responses, all operations (and their requests and responses) are uniquely numbered with counter $c$, so
duplicated responses will be identified and ignored (i.e., only the first one will be processed by the client).

In the remainder of the section we focus on providing the code run by the servers, i.e., the distributed ledger emulation. 
The servers will take into account Code~\ref{code:interface-distributed}, and in particular the fact that
clients send the same request to multiple servers. This is important, for instance, to make sure that the same record $r$ is not included in the sequence of records of the ledger multiple times.
As already mentioned, our algorithms will use as a building block an atomic broadcast service. Consequently, our algorithms' correctness
depends on the modeling assumptions of the specific atomic broadcast implementation used. 
We now give the guarantees that our atomic broadcast service need to provide.

\subsubsection{Atomic Broadcast Service}
\label{ABS}

The Atomic Broadcast service (aka, total order broadcast service)~\cite{DBLP:journals/csur/DefagoSU04}
has two operations: $\act{ABroadcast}(m)$ used by a server to broadcast a message $m$ 
to all servers $s\in \SoS$, and  
$\act{ADeliver}(m)$ used by the atomic broadcast service to deliver a message $m$ to a server. 
The following properties are guaranteed (adopted from~\cite{DBLP:journals/csur/DefagoSU04}):
\begin{itemize} 
	\item 
	\textit{Validity}: if a correct server broadcasts a message, then it will eventually deliver it.
	\item
	\textit{Uniform Agreement}: if a server delivers a message, then all correct servers will eventually deliver that message.
	\item
	\textit{Uniform Integrity}: a message is delivered by each server at most once, and only if it was previously broadcast.
	\item
	\textit{Uniform Total Order}: the messages are totally ordered; that is, if any server delivers message $m$ before message $m'$, then every server that delivers them, must do it in that order.
\end{itemize}

\nn{Note that the Atomic Broadcast service requires at least a majority of servers not to crash (i.e., $f<n/2$).}
  
\subsubsection{Eventual Consistency and Relation with Consensus} 
We now use the Atomic Broadcast service to implement distributed ledgers in our set of servers $\SoS$ guaranteeing different consistency semantics. We start by showing that the algorithm presented in Code~\ref{impl:weak} 
implements an eventually consistent ledger, as specified in Definition~\ref{def:ec}.

\begin{figure}
	\begin{multicols}{2}
		\begin{algorithm}[H]
			\caption{\small Eventually Consistent Distributed Ledger $\ledger$; Code for Server $i \in \SoS$}
			\label{impl:weak}
			\begin{algorithmic}[1]
				\State \textbf{Init:} $S_i \leftarrow \emptyset$
				\Receive{c, {\sc get}} 
				\State \textbf{send} response (c, {\sc getRes}, $S_i$) to $p$
				\EndReceive
				\Receive{c, {\sc append}, $r$} 
				\State $\act{ABroadcast}(r)$
				\State \textbf{send} response (c, {\sc appendRes}, {\sc ack}) to $p$
				\EndReceive
				\Upon{$\act{ADeliver}(r)$}
				\If{$r \notin S_i$} $S_i \leftarrow S_i  \extends r$
				\EndIf
				\EndUpon
			\end{algorithmic}
		\end{algorithm}
		%
		\begin{algorithm}[H]
			\caption{\small Consensus Algorithm Using an Eventually Consistent Ledger $\ledger$}
			\label{impl:consensus}
			\begin{algorithmic}[1]
				\Function{\act{propose}}{v}
				\State $\ledger.\act{append(v)}$
				\State $V_i \leftarrow \ledger.\act{get()}$
				\While{$V_i = \emptyset$}
				\State $V_i \leftarrow \ledger.\act{get()}$
				\EndWhile
				\State \textbf{decide} the first value in $V_i$
				\EndFunction
			\end{algorithmic}
		\end{algorithm}
	\end{multicols}\vspace{-1em}
\end{figure}

\begin{lemma}
	\label{o:eventual}
	The combination of the algorithms presented in Code~\ref{code:interface-distributed} and Code~\ref{impl:weak} implements an eventually consistent distributed ledger.
\end{lemma}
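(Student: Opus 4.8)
The plan is to verify the two conditions of Definition~\ref{def:ec} directly, exploiting the total order imposed by the atomic broadcast service. First I would extract from an arbitrary execution producing the complete history $\hist{\ledger}$ a single \emph{canonical sequence} $S^{*}$ of records: by Uniform Agreement and Uniform Total Order every server delivers its messages in one common order (correct servers delivering all of them), and the guard $r\notin S_i$ in Code~\ref{impl:weak} discards repeated deliveries of the same record, which arise because a client sends the same $\act{append}$ request to the $|L|\ge f+1$ servers of its set $L$, each of which may $\act{ABroadcast}$ it. Since deduplicating a prefix yields a prefix of the deduplicated whole, I would argue that at every point in time each local $S_i$ is a prefix of $S^{*}$ and that $S_i$ only grows. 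I would also record the record/operation correspondence: by Uniform Integrity each record of $S^{*}$ was previously broadcast, hence issued by some $\ledger.\act{append}(r)$ (record identifiers being unique, this is a bijection); conversely, for every $\ledger.\act{append}(r)\in\hist{\ledger}$ at least one correct server in $L$ receives the request over the reliable channel and broadcasts $r$, so by Validity $r$ is eventually delivered, enters $S^{*}$, and, by Uniform Agreement, enters the local sequence of every correct server. Finally, since a record can be read only after it has been appended, every record returned by a $\ledger.\act{get}()$ of $\hist{\ledger}$ is the record of an append of $\hist{\ledger}$.

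Next I would build the permutation $\sigma$ for condition (a). Because eventual consistency imposes no real-time or per-process ordering constraint, I have complete freedom here: I list the $\act{append}$ operations in the order their records appear in $S^{*}$, and I insert each $\ledger.\act{get}()$ returning a value $V$ immediately after the block of appends whose records form $V$ (gets returning the same $V$ placed together, gets returning the empty sequence placed first). This is well defined precisely because each returned $V$ equals some $S_i$ and is therefore a prefix of $S^{*}$, and because the appends preceding the get in $\sigma$ are exactly those of $\hist{\ledger}$ whose records occupy the first $|V|$ positions of $S^{*}$, i.e.\ those forming $V$. Checking that $\sigma$ follows the sequential specification of Definition~\ref{def:sspec} is then immediate: at the invocation of each get in $\sigma$ the value of the ledger is the concatenation of all preceding appends, which is exactly the prefix $V$ returned, and each append extends the sequence by its record.

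The delicate part, and the step I expect to be the main obstacle, is condition (b): exhibiting an extension $\hist{\ledger}'$ after which every get returns a sequence containing every record appended in $\hist{\ledger}$. The difficulty is that a get adopts the \emph{first} response it receives, which could come from a still-alive but lagging \emph{faulty} server holding a stale $S_i$, so completeness does not follow merely from correct servers catching up. I would resolve this by combining two facts: (i) since $\hist{\ledger}$ is finite it contains finitely many appends, so by the first step there is a time $T$ after which every correct server's local sequence contains all of those records; and (ii) every faulty server eventually crashes, so there is a time $T'$ after which no faulty server can send a response. I then take $\hist{\ledger}'$ to be the prefix of the history consisting of all events up to $\max(T,T')$. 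Any complete $\ledger.\act{get}()$ in $\hist{\ledger}''\setminus\hist{\ledger}'$ is invoked after this time; the first response it receives is sent by a non-crashed server, which must be correct (by $T'$) and hence holds all records of $\hist{\ledger}$ (by $T$), while termination is guaranteed because at least one correct server lies in $L$ (as $|L|\ge f+1>f$) and answers over the reliable channel. Hence the returned sequence contains $r$ for every $\ledger.\act{append}(r)\in\hist{\ledger}$, which establishes (b) and completes the proof.
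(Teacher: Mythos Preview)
Your proposal is correct and follows essentially the same approach as the paper's proof: order the appends by the atomic-broadcast delivery order of the (first copies of the) records, slot each $\act{get}()$ after the append producing the last record of its returned prefix, and for condition~(b) wait until all records of $\hist{\ledger}$ have reached every correct server \emph{and} every faulty server has crashed. Your write-up is in fact more careful than the paper's---you make explicit the deduplication argument showing each $S_i$ is a prefix of $S^{*}$, and you spell out why the ``first response'' semantics of Code~\ref{code:interface-distributed} forces one to wait for faulty servers to crash (the paper states this in one clause without motivation).
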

\begin{proof}
	The lemma follows from the properties of atomic broadcast. Considering any complete history 
	$\hist{\ledger}$, the permutation $\sigma$ that follows the sequential specification can be defined as follows. The $\ledger.\act{append}(r)$ operations are sorted in $\sigma$ in the order the atomic broadcast service delivers the first copy of the corresponding records $r$ (which is the same for all servers that receive them).
	(Observe that any later delivery of $r$ is discarded.) Then, consider any $\pi=\ledger.\act{get}()$ operation. Let $V$ be the sequence returned by $\pi$ and $r$ the last record in $V$. Then, $\pi$ is placed in $\sigma$ after the operation $\ledger.\act{append}(r)$ and before the next append operation in $\sigma$. If $V$ is empty, then $\pi$ is placed before any append operation. The get operations are sorted randomly between append operations. The algorithm in Code~\ref{impl:weak} guarantees that
	this permutation $\sigma$ follows the sequential specification.
	
	Now, let us now consider an $\ledger.\act{append}(r)$ operation in $\hist{\ledger}$, which by definition is
	completed. From Code~\ref{code:interface-distributed} at least one correct server $i$ received the request 
	$(c, {\sc append}, r)$. Then, from Code~\ref{impl:weak}, server $i$ invoked $\act{ABroadcast}(r)$. 
	By validity, the record $r$ is delivered to server $i$. Then, by uniform agreement and uniform total order properties, 
	all the correct servers receive the first copy of $r$ in the same order, and hence all add $r$ in the same position in their local sequences. Any later delivery of $r$ to a server $j$ is discarded (since $r$ is already in the sequence) so $r$ appears only once in the server sequences. Hence, after $r$ is included in the sequences of all correct servers and all faulty servers have crashed, all get operations return sequences that contain $r$.\vspace{.5em}
\end{proof}

Let us now explore the power of any eventually consistent distributed ledger. It is known that
atomic broadcast is equivalent to consensus in a crash-prone system like the one considered here \cite{CT96}. Then, the algorithm presented in Code~\ref{impl:weak} can be implemented as soon as a consensus object is available. What we show now is that a distributed ledger that provides the eventual consistency can be used to solve the consensus problem, defined as follows.

\paragraph{Consensus Problem:} Consider a system with at least one non-faulty process and in which each process $p_i$ proposes 
a value $v_i$ from the set $V$ (calling a $\act{propose(v_i)}$ function), and then decides a value $o_i \in V$, called  the \emph{decision}.  
Any decision is irreversible, and the following conditions are satisfied: 
$(i)$  \emph{Agreement}:  All decision values are identical.
$(ii)$ \emph{Validity:} If all calls to the propose function that occur contain the same value $v$, then $v$ is the only possible decision value. 
and $(iii)$ \emph{Termination:}  In any fair execution every non-faulty process decides a value.

\begin{lemma}
	The algorithm presented in Code~\ref{impl:consensus} solves the consensus problem if the ledger $\ledger$ guarantees eventual consistency.
\end{lemma}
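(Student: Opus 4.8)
The plan is to fix an arbitrary fair execution of Code~\ref{impl:consensus}, let $\hist{\ledger}$ be the induced complete history of the $\act{append}$ and $\act{get}$ operations on the ledger $\ledger$, and invoke eventual consistency (Definition~\ref{def:ec}) to obtain a single permutation $\sigma$ of these operations that follows the sequential specification of $\ledger$. Everything will hinge on one observation about $\sigma$: by the sequential specification, the value of $\ledger.S$ at any point of $\sigma$ is exactly the concatenation, in $\sigma$-order, of the records of the $\act{append}$ operations preceding that point; hence any $\act{get}$ that returns a nonempty sequence returns one whose \emph{first} record is the record $r_1$ of the first $\act{append}$ appearing in $\sigma$. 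Since $\sigma$ is a single fixed permutation, this $r_1$ is the same for all such $\act{get}$ operations.

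For \emph{Agreement}, I would note that a process decides only after its $\act{get}$ returns a nonempty $V_i$, and it decides the first value of $V_i$. By the observation above, the first record of every nonempty returned sequence is $r_1$, so every deciding process decides $r_1.v$; the decisions are therefore identical. \emph{Validity} follows similarly: if every $\act{propose}$ call carries the same value $v$, then every record appended to $\ledger$ has data field $v$, so the first value of any nonempty returned sequence---in particular $r_1.v$---equals $v$, making $v$ the only possible decision.

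For \emph{Termination}, I would argue in two steps. First, every individual ledger operation a non-faulty process invokes completes: by Code~\ref{code:interface-distributed} each request reaches at least one correct server (since $|L| \geq f+1$), which responds, so both the initial $\act{append}(v_i)$ and each $\act{get}()$ in the while loop return. Second, since $p_i$'s $\act{append}(v_i)$ completes, the record $r_i$ for $v_i$ belongs to $\hist{\ledger}$; by clause (b) of Definition~\ref{def:ec} there is a point in the execution after which every completed $\act{get}()$ returns a sequence containing $r_i$, hence a nonempty sequence. In a fair execution $p_i$ keeps invoking $\act{get}()$, so eventually one such call is issued past that point, the while loop exits, and $p_i$ decides.

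The main obstacle is the Agreement step, and more precisely making rigorous the claim that all nonempty $\act{get}$ results share the same first record. The care needed is that eventual consistency only guarantees the \emph{existence} of one permutation $\sigma$; I must use that single $\sigma$ consistently and appeal to the sequential specification to pin down that the first appended record in $\sigma$ is the unique possible first element of every nonempty returned sequence, rather than reasoning operationally about the (possibly temporarily diverging) server-local sequences $S_i$ of Code~\ref{impl:weak}.
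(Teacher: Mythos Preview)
Your proposal is correct and follows essentially the same approach as the paper: use condition~(a) of Definition~\ref{def:ec} (the single permutation $\sigma$ respecting the sequential specification) to argue that all nonempty sequences returned by $\act{get}()$ share the same first record, giving Agreement and Validity, and use condition~(b) for Termination. One minor remark: your appeal to Code~\ref{code:interface-distributed} for operation completion is out of place, since the lemma treats $\ledger$ as an arbitrary eventually consistent ledger rather than the specific implementation of Codes~\ref{code:interface-distributed}--\ref{impl:weak}; the paper simply takes completion of each $\act{append}$ and $\act{get}$ as given for a correct process.
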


\begin{proof}
	A correct process $p$ that invokes $\act{propose}_p(v)$ will complete its $\ledger.\act{append}_p(v)$
	operation. 
	By eventual consistency,  some 
	$\ledger.\act{get}_p()$ will eventually return a non-empty sequence.
	Condition (a) of Definition~\ref{def:ec} guarantees that, given any two sequences returned by $\ledger.\act{get}()$ operations, one is a prefix of the other, hence guaranteeing agreement. Finally, from the same condition, the sequences returned by $\ledger.\act{get}()$ operations can only contain values appended with $\ledger.\act{append}_p(v)$, hence guaranteeing validity.~\end{proof}\medskip

Combining the above arguments and lemmas we have the following theorem.

\begin{theorem}
	Consensus and eventually consistent distributed ledgers are equivalent in a crash-prone distributed system.
\end{theorem}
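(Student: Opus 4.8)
The plan is to read ``equivalent'' in the standard sense of mutual implementability: I must exhibit a reduction in each direction, each respecting the asynchronous crash-prone model (including the majority assumption $f<n/2$ required by the atomic broadcast service of Section~\ref{ABS}). Both directions are essentially already in hand from the preceding lemmas, so the real work is to assemble them and to check that the two models line up.

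First I would dispatch the direction \emph{eventually consistent ledger implies consensus}. This is exactly the content of the lemma established via Code~\ref{impl:consensus}: given any object guaranteeing eventual consistency in the sense of Definition~\ref{def:ec}, the $\act{propose}$-then-decide protocol of that code solves consensus, satisfying Agreement, Validity and Termination. Nothing further is needed here beyond quoting that lemma, since it already argues that the returned sequences are prefix-ordered (Agreement), contain only proposed values (Validity), and are eventually non-empty for a correct proposer (Termination).

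Next I would handle the converse, \emph{consensus implies eventually consistent ledger}, by chaining two facts. By the result of Chandra and Toueg~\cite{CT96}, consensus and atomic broadcast are equivalent in a crash-prone system; in particular, repeated consensus instances yield an atomic broadcast service satisfying Validity, Uniform Agreement, Uniform Integrity and Uniform Total Order. Feeding that service into Code~\ref{impl:weak}, together with the client interface of Code~\ref{code:interface-distributed}, Lemma~\ref{o:eventual} then certifies that the resulting emulation is an eventually consistent distributed ledger. Composing the two reductions gives the claimed equivalence.

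The step I expect to require the most care is the compatibility of the models across the reduction of~\cite{CT96}. Consensus as defined here is a one-shot primitive, whereas atomic broadcast, and hence the ledger with its unboundedly many $\act{append}$ operations, is long-lived; so the reduction must instantiate a fresh consensus object per delivered message and argue that the Termination hypothesis of each instance is discharged in every fair execution. I would also verify that the failure bound $f<n/2$ assumed by atomic broadcast is consistent with, and inherited by, the consensus-based construction, so that neither direction silently strengthens the underlying system assumptions and the equivalence is genuinely within a single, fixed model.
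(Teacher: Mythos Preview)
Your proposal is correct and follows essentially the same route as the paper: the paper's proof is literally the one sentence ``Combining the above arguments and lemmas we have the following theorem,'' invoking Lemma~\ref{o:eventual} (atomic broadcast $\Rightarrow$ eventually consistent ledger via Code~\ref{impl:weak}), the equivalence of consensus and atomic broadcast from~\cite{CT96}, and the lemma for Code~\ref{impl:consensus} (eventually consistent ledger $\Rightarrow$ consensus). Your additional scrutiny of the one-shot-versus-long-lived mismatch and the $f<n/2$ assumption goes beyond what the paper itself spells out, but is a reasonable due-diligence check rather than a departure in approach.
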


\subsubsection{Atomic Consistency}
Observe that the eventual consistent implementation does not guarantee that record $r$ has been added to the ledger before 
a response {\sc AppendRes} is received by the client $p$ issuing the $\act{append}(r)$. 
This may lead to situations in which a client may complete an $\act{append}()$ operation, and a succeeding  
$\act{get}()$ may not contain the appended record. This behavior is also apparent in Definition \ref{def:ec}, that allows any $\act{get}()$ operation,
that is invoked and completed in $\hist{\ledger}'$, to return a sequence that does not include a record $r$ which was appended by an $\act{append}(r)$ operation 
that appears in $H_\ledger$.

\begin{figure}
		\begin{algorithm}[H]
			\caption{\small Atomic Distributed Ledger; Code for Server $i$}
			\label{impl:at}
			\begin{multicols}{2}
			\begin{algorithmic}[1]
				\State \textbf{Init:} $S_i \leftarrow \emptyset$; 
				$get\_pending_i \leftarrow \emptyset$;
				$pending_i \leftarrow \emptyset$\vspace{-1em}
				\Statex
				\Receive{$c$, {\sc get}} 
				\State $\act{ABroadcast}(get,p,c)$ 
				\State add $(p,c)$ to $get\_pending_i$ 
				\EndReceive
				\Upon{$\act{ADeliver}(get,p,c)$}
				\State \indent \textbf{if} $(p,c) \in get\_pending_i$ \textbf{then}
				\State \indent \indent \textbf{send} response ($c$, {\sc getRes}, $S_i$) to $p$
				\State \indent \indent remove $(p,c)$ from $get\_pending_i$
				\EndUpon
				\Receive{$c$, {\sc append}, $r$} 
				\State $\act{ABroadcast}(append, r)$
				\State add $(c,r)$ to $pending_i$ 
				\EndReceive
				\Upon{$\act{ADeliver}(append, r)$}
				\If{$r \notin S_i$} 
				\State $S_i \leftarrow S_i  \extends r$
				
				\If {$\exists (c,r) \in pending_i$}
				\State {\footnotesize \textbf{send} response ($c$, {\sc appendRes}, {\sc ack}) to $r.p$}
				\State  remove $(c,r)$ from $pending_i$ 
				\vspace{-1em}
				\EndIf
				\EndIf
				\EndUpon
			\end{algorithmic}
			\end{multicols}
		\end{algorithm}\vspace{-1em}
\end{figure}

An \textit{atomic distributed ledger} avoids this problem and requires that a record $r$ appended by an $\act{append}(r)$ 
operation, is received by any succeeding $\act{get}()$ operation, even if the two operations where invoked at different processes. 
Code \ref{impl:at}, describes the algorithm at the servers in order to implement an atomic consistent distributed ledger. The algorithm 
of each client is depicted from Code \ref{code:interface-distributed}. \nn{This algorithm resembles approaches used in \cite{WTLW14} for implementing
arbitrary objects, and \cite{MR99,CGL93, AW2004} for implementing consistent read/write objects.}
Briefly, when a server receives
a \textit{get} or an \textit{append} request, it adds the request in a pending set and atomically broadcasts the request to all other servers. 
When an append or get message is delivered, then the server replies to the requesting process (if it did not reply yet).

\begin{theorem}
	The combination of the algorithms presented in Codes  \ref{code:interface-distributed} and \ref{impl:at} implements an atomic distributed ledger.
\end{theorem}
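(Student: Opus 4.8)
The plan is to establish linearizability directly from Definition~\ref{def:atomic} by reading the serialization order off the atomic broadcast. The crucial difference from the eventually consistent construction of Lemma~\ref{o:eventual} is that here \emph{both} \act{get} and \act{append} requests are passed through \act{ABroadcast}, so every operation acquires a well-defined position in the single total order produced by the broadcast service, and that total order \emph{is} the linearization. First I would argue that every operation in a complete history $\hist{\ledger}$ has its message delivered, hence a well-defined serialization point: since a client sends each request to a set $L$ with $|L|\ge f+1$, at least one correct server receives it and invokes \act{ABroadcast}, and by \emph{Validity} that server eventually delivers the message (and by \emph{Uniform Agreement} so do all correct servers). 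For each operation $\pi$ let $T(\pi)$ denote the position, in the broadcast total order, of the \emph{first} delivered copy of $\pi$'s message (several servers in $L$ may broadcast copies of the same request, but the guards $r\notin S_i$ and $(p,c)\in get\_pending_i$ ensure only the first copy has any effect). I would then define $\sigma$ to be all operations of $\hist{\ledger}$ ordered by $T(\cdot)$; because the total order is strict on distinct messages, $T$ is injective and $\sigma$ is well defined.

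Next I would verify condition~(1), that $\sigma$ follows the sequential specification of Definition~\ref{def:sspec}. By \emph{Uniform Total Order} every server inserts records in the same order, namely the order of first copies of \act{append} messages, and the duplicate guard guarantees each record $r$ is inserted exactly once, at position $T$ of its append. Consequently, when any server processes the first copy of a \act{get} message sitting at total-order position $t$, its local sequence $S_i$ is exactly the concatenation of the records whose appends have $T(\cdot)<t$, in increasing $T$ order, and this value is identical at every server that has delivered up to $t$. Hence the sequence returned by a \act{get} operation $\pi$ is precisely the concatenation of the records of the \act{append} operations preceding $\pi$ in $\sigma$, which is what the sequential specification prescribes.

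The main work, and the step I expect to be the main obstacle, is condition~(2): if $\pi_1\bef\pi_2$ in $\hist{\ledger}$ then $T(\pi_1)<T(\pi_2)$. The argument I would give is: the response of $\pi_1$ is emitted by a server only \emph{upon} delivering a copy of $\pi_1$'s message, so some copy of $\pi_1$'s message, and in particular the first one at position $T(\pi_1)$, is delivered no later than the time $t_1$ at which $\pi_1$ completes. Because $\pi_1\bef\pi_2$, every copy of $\pi_2$'s message is broadcast strictly after $t_1$, since a server broadcasts it only after receiving $\pi_2$'s request and the client sends that request only after $\pi_1$ has responded. Suppose toward a contradiction that $T(\pi_2)<T(\pi_1)$; then by \emph{Uniform Total Order} any server delivering $\pi_1$'s first copy must deliver $\pi_2$'s first copy before it, hence before $t_1$, contradicting \emph{Uniform Integrity}, since no copy of $\pi_2$'s message had been broadcast by time $t_1$. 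Therefore $T(\pi_1)<T(\pi_2)$.

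The delicate points to get right are (i) reconciling the pairwise statement of \emph{Uniform Total Order} with the ``gap-free'' delivery implicitly invoked when I identify $S_i$ at a \act{get}'s delivery with the exact prefix of preceding appends, and (ii) handling the multiplicity of request copies uniformly across the four cases \emph{append}/\emph{append}, \emph{append}/\emph{get}, \emph{get}/\emph{append}, and \emph{get}/\emph{get}; both are dispatched by always reasoning about the globally first copy of each operation's message and by noting that \act{get} messages leave $S_i$ unchanged. Combining conditions~(1) and~(2) shows $\sigma$ is a valid linearization of $\hist{\ledger}$, which establishes the theorem.
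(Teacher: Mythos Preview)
Your proposal is correct and takes a genuinely different route from the paper. The paper does not construct the permutation $\sigma$ explicitly; instead it isolates four real-time ordering conditions on pairs of non-concurrent operations (\textbf{A1}: $\act{append}\bef\act{append}$, \textbf{A2}: $\act{append}\bef\act{get}$, \textbf{A3}: $\act{get}\bef\act{get}$, \textbf{A4}: $\act{get}\bef\act{append}$) and verifies each case separately by chasing the atomic-broadcast properties through the code. Your argument is more uniform: you read the linearization directly off the broadcast total order via the map $T(\cdot)$ and then check the two clauses of Definition~\ref{def:atomic} once, handling all four operation-pair cases simultaneously in the real-time step. This buys you a cleaner connection to the definition (the paper never explicitly exhibits $\sigma$ nor argues that \textbf{A1}--\textbf{A4} are sufficient for linearizability), while the paper's case split makes the role of each broadcast property in each scenario more visible. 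Your explicit flag on the ``gap-free delivery'' issue is apt: the paper's proof relies on the same implicit step (e.g., in \textbf{A2} and \textbf{A3} it asserts that the responding server has already delivered all earlier appends), and for correct servers this does follow from \emph{Uniform Agreement} together with \emph{Uniform Total Order}, as you indicate.
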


\begin{proof}
	To show that atomic consistency is preserved, we need to prove that our algorithm satisfies the properties 
	presented in Definition \ref{def:atomic}. The underlying atomic broadcast defines the order of events 
	when operations are concurrent. It remains to show that operations that are separate in time can be ordered 
	with respect to their real time ordering. 
	The following properties capture the necessary conditions that must be satisfied by non-concurrent 
	operations that appear in a history $H_\ledger$:\vspace{-.5em}
	\begin{enumerate}
		\item[\textbf{A1}] if $\act{append}_{p_1}(r_1) \bef \act{append}_{p_2}(r_2)$ from processes $p_1$ and $p_2$, 
		then $r_1$ must appear before $r_2$ in 
		any sequence returned by the ledger
		\item[\textbf{A2}] if $\act{append}_{p_1}(r_1) \bef \act{get}_{p_2}()$, then  $r_1$ appears in the sequence returned by $\act{get}_{p_2}()$
		\item[\textbf{A3}] if $\op_1$ and $\op_2$ are two $\act{get}()$ operations from $p_1$ and $p_2$, s.t. $\op_1 \bef \op_2$,
		that return sequences $S_1$ and $S_2$ respectively, then $S_1$ must be a prefix of $S_2$
		\item[\textbf{A4}] if $\act{get}_{p_1}()\bef \act{append}_{p_2}(r_2)$, then  $p_1$ returns a sequence $S_1$ that does not contain $r_2$
	\end{enumerate}
	
	Property, \textbf{A1} is preserved from the fact that record $r_1$ is atomically broadcasted and delivered before~$r_2$ is broadcasted among the servers. In particular, let $p_1$ be the process that invokes $\op_1=\act{append}_{p_1}(r_1)$, 
	and $p_2$ the process that invokes $\op_2= \act{append}_{p_2}(r_2)$ ($p_1$ and $p_2$ may be the same process). 
	Since $\op_1\bef\op_2$, then $p_1$ receives a response to the $\op_1$ 
	operation, before $p_2$  
	invokes the $\op_2$ operation. Let server $s$ be the first to respond to $p_1$ for $\op_1$.
	Server $s$ sends a response only if the procedure $\act{ADeliver}(append, r_1)$ occurs at $s$. This means that the atomic broadcast service delivers 
	$(append, r_1)$ to $s$. Since $\op_1\bef\op_2$ then no server received the append request for $\op_2$, and thus $r_2$ was not 
	broadcasted before the $\act{ADeliver}(append, r_1)$ at $s$. Hence, by the \textit{Uniform Total Order} of the atomic broadcast,
	every server delivers $(append, r_1)$ before delivering $(append, r_2)$. Thus, the $\act{ADeliver}(append, r_2)$ occurs in any server $s'$
	after the appearance of $\act{ADeliver}(append, r_1)$ at $s'$. Therefore, if $s'$ is the first server to reply to $p_2$ for $\op_2$,
	it must be the case that $s'$ added $r_1$ in his ledger sequence before adding $r_2$. 
	
	In similar manner we can show that property \textbf{A2} is also satisfied. In particular let processes 
	$p_1$ and $p_2$ (not necessarily different), invoke operations $\op_1 = \act{append}_{p_1}(r_1)$
	and $\op_2 = \act{get}_{p_2}()$, s.t. $\op_1\bef\op_2$. Since $\op_1$ completes before $\op_2$ 
	is invoked then there exists some server $s$ in which $\act{ADeliver}(append, r_1)$ occurs before responding 
	to $p_1$. Also, since the {\sc get} request from $p_2$ is sent, after $\op_1$ has completed, then 
	it follows that is sent after  $\act{ADeliver}(append, r_1)$ occured in $s$. Therefore, $(get, p_2, c)$ is broadcasted 
	after $\act{ADeliver}(append, r_1)$ as well. Hence by \textit{Uniform Total Order} atomic broadcast, every server delivers $(append, r_1)$ 
	before delivering $(get, p_2, c)$. So if $s'$ is the first server to reply to $p_2$, it must be the case that 
	$s'$ received $(append, r_1)$ before receiving $(get, p_2, c)$ and hence replies with an $S_i$ to $p_2$ 
	that contains $r_1$. 
	
	The proof of property \textbf{A3} is slightly different. Let $\op_1=\act{get}_{p1}()$ and 
	$\op_2=\act{get}_{p2}()$, s.t. $\op_1\bef\op_2$. Since $\op_1$ completes before $\op_2$ is invoked then 
	the $(get, p_1, c_1)$ must be delivered to at least a server $s$ that responds to $p_1$, before the invocation 
	of $\op_2$, and thus the broadcast of $(get, p_2, c_2)$. By \textit{Uniform Total Order} again, all servers 
	deliver $(get, p_1, c_1)$ before delivering $(get, p_2, c_2)$. Let $S_1$ be the sequence sent by $s$ to $p_1$.
	Notice that $S_1$ contains all the records $r$ such that $(append, r)$ delivered to $s$ before the 
	delivery of $(get, p_1, c_1)$ to $s$. 
	Thus, for every $r$ in $S_1$, $\act{ADeliver}(append, r)$ occurs in $s$ before $\act{ADeliver}(get, p_1, c)$. Let $s'$ 
	be the first server that responds for $\op_2$. By \emph{Uniform Agreement}, since $s'$ has not crashed before responding to $p_2$, then 
	every $r$ in $S_1$ that was delivered in $s$, was also delivered in $s'$. Also, by \emph{Uniform Total Order},
	it must be the case that all records in $S_1$ will be delivered to $s'$ in the same order that have been delivered to $s$. 
	Furthermore all the records will be delivered to $s'$ before the delivery of $(get, p_1, c_1)$. Thus, all records are 
	delivered at server $s'$ before $(get, p_2,c_2)$ as well, and hence the sequence $S_2$ sent by $s'$ to $p_2$ 
	is a suffix of $S_1$.
	
	Finally, if $\act{get}_{p_1}()\bef \act{append}_{p_2}(r_2)$ as in property \textbf{A4}, then trivially $p_1$ cannot return $r_2$, 
	since it has not yet been broadcasted (\emph{Uniform Integrity} of the atomic broadcast).
\end{proof}

\subsubsection{Sequential Consistency}
An atomic distributed ledger also satisfies sequential consistency. As sequential 
consistency is weaker than atomic consistency, one may wonder whether 
a sequentially consistent ledger can be implemented in a simpler way. 

We propose here an implementation, depicted in Code~\ref{impl:sqc}, that avoids the atomic broadcast 
of the get requests. Instead, it applies some changes to the client code to achieve sequential consistency, as presented in Code~\ref{code:interface-distributed-sqc}.
This implementation provides both sequential (cf. Definition~\ref{def:sc})
and eventual consistency (cf. Definition~\ref{def:ec}) \nn{and is reminiscent to algorithms presented for registers, stacks, and queues in \cite{AW94, AW2004}}.\vspace{-.2em}


\begin{figure}[t]
	\begin{multicols}{2}
		\begin{algorithm}[H]
			\caption{\small Sequentially Consistent Distributed Ledger; Code for Server $i$}
			\label{impl:sqc}
			\begin{algorithmic}[1]
				\State \textbf{Init:} $S_i \leftarrow \emptyset$; $pending_i \leftarrow \emptyset$; $get\_pending_i \leftarrow \emptyset$ 
				\Receive{c, {\sc get}, $\ell$}
				\If{$|S_i| \geq \ell$} 
				\State \textbf{send} response (c, {\sc getRes}, $S_i$) to $p$
				\Else
				\State \textbf{add} $(c,p,\ell)$ to $get\_pending_i$
				\EndIf
				\EndReceive
				\Receive{c, {\sc append}, $r$} 
				\State $\act{ABroadcast}(c, r)$
				\State \textbf{add} $(c,r)$ to $pending_i$
				\EndReceive
				\Upon{$\act{ADeliver}(c,r)$}
				\If{$r \notin S_i$} $S_i \leftarrow S_i  \extends r$
				\EndIf
				\If{$(c,r) \in pending_i$}
				\State \textbf{send} resp. (c, {\sc appendRes}, {\sc ack}, $|S_i|$) to $r.p$
				\State \textbf{remove} $(c,r)$ from $pending_i$
				\EndIf
				\If{$\exists (c',p,\ell) \in get\_pending_i: |S_i| \geq \ell$}
				\State \textbf{send} response ($c'$, {\sc getRes}, $S_i$) to $p$
				\State \textbf{remove} $(c',p,\ell)$ from $get\_pending_i$
				\EndIf
				\EndUpon
			\end{algorithmic}
		\end{algorithm}
		
		\begin{algorithm}[H]
			\caption{\small External Interface for Sequential Consistency Executed by a Process $p$}
			\label{code:interface-distributed-sqc}
			\begin{algorithmic}[1]
				\State $c \leftarrow 0$; $\ell_{last} \leftarrow 0$
				\State Let $L \subseteq \SoS: |L| \geq f+1$ 
				\Function{$\ledger.\act{get}$}{~}
				\State $c \leftarrow c + 1$
				\State {\bf send} request ($c$, {\sc get}, $\ell_{last}$) to the servers in $L$ 
				\State \textbf{wait} response ($c$, {\sc getRes}, $V$) from some $i\in L$ 		
				\State $\ell_{last} \leftarrow |V|$	
				\State \textbf{return} $V$
				\EndFunction
				\Function{$\ledger.\act{append}$}{r}
				\State $c \leftarrow c + 1$
				\State \textbf{send} request ($c$, {\sc append}, $r$) to the servers in $L$
				\State \textbf{wait} response ($c$, {\sc appendRes}, $res$, $pos$) from some $i\in L$
				\State $\ell_{last} \leftarrow pos$
				\State \textbf{return} $res$
				\EndFunction
			\end{algorithmic}
		\end{algorithm}
	\end{multicols}\vspace{-1em}
\end{figure}

\begin{theorem}
	\label{thm:sc}
	The combination of the algorithms presented in Code~\ref{impl:sqc} and Code~\ref{code:interface-distributed-sqc} implements a sequentially consistent distributed ledger.
\end{theorem}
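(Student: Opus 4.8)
The plan is to reuse the permutation construction from the proof of Lemma~\ref{o:eventual} and then verify the additional per-process ordering demanded by Definition~\ref{def:sc}, which, unlike atomicity, imposes no constraint across different clients. First I would establish the backbone order of records. By Uniform Total Order and Uniform Agreement of the atomic broadcast (exactly as in Lemma~\ref{o:eventual}), every server delivers the first copy of each distinct record in one common order $r^{(1)}, r^{(2)}, \dots$; duplicate deliveries are discarded via the test $r\notin S_i$; and hence at every instant each local sequence $S_i$ equals a prefix $r^{(1)}\cdots r^{(k)}$ of this global sequence. This prefix property is the single structural fact on which everything else rests.

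Next I would build the permutation $\sigma$. The $\act{append}$ operations are ordered by the global record order; a $\ledger.\act{get}_\pr()$ returning $V$ is inserted in the slot immediately following the $\act{append}$ of the last record of $V$ (and before all appends if $V=\emptyset$). Within a single slot only get operations occur, and I would order them by any linear extension of the union of the per-process program orders (a disjoint union of chains, which always admits one). That $\sigma$ follows the sequential specification is then immediate from the prefix property: a get placed in slot $k$ returns $V=r^{(1)}\cdots r^{(k)}$, and the operations preceding it in $\sigma$ are precisely the appends of $r^{(1)},\dots,r^{(k)}$, so it returns the concatenation of exactly the records appended before it.

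The heart of the argument is showing that $\sigma$ respects program order for each client, and here the level variable $\ell_{last}$ does the work. I would attach to each completed operation its position in $\sigma$ and prove this position is non-decreasing along program order, strictly increasing whenever an append is involved. For $\act{append}_\pr(r_1)\bef\act{append}_\pr(r_2)$ the argument is the one for property \textbf{A1} of the atomic proof specialized to one process: $r_2$ is broadcast only after $r_1$ has been delivered to the server that answered the first append, so Uniform Total Order forces $r_1$ before $r_2$. For $\act{append}_\pr(r_1)\bef\ledger.\act{get}_\pr()$, the append response carries $pos=|S_i|$, which by the prefix property equals the global index of $r_1$; the client stores it in $\ell_{last}$, and the guarded get is answered only by a server with $|S_i|\ge\ell_{last}$, whose sequence therefore contains $r_1$, placing the get in a slot at or beyond $r_1$. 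For $\ledger.\act{get}_\pr()\bef\act{append}_\pr(r_2)$ I would show $r_2\notin V$: every broadcast of $r_2$ occurs after the answering server sent $V$, so by Uniform Integrity $r_2$ cannot already lie in the prefix $V$, whence its global index exceeds $|V|$ and its append falls in a later slot. For two successive gets, the second requests level $|V_1|$ and is answered by a server with $|S_i|\ge|V_1|$, so $|V_2|\ge|V_1|$ and the second get lands in a slot at or beyond the first, equal slots being separated by the intra-slot program-order extension.

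I expect the main obstacle to be the get-before-append and get-before-get cases, i.e., arguing that the $\ell_{last}$ bookkeeping genuinely certifies that a client never observes a prefix shorter than one it has already seen or written. This reduces to two points that must be nailed down carefully: that the $pos$ returned by an append equals the true global index of the appended record (again via the prefix property), and that a record broadcast after an observation cannot retroactively appear inside that observation (Uniform Integrity together with the real-time ordering of broadcast versus response). I would also add a brief liveness argument to keep the histories complete: since $\ell_{last}$ always names a prefix already delivered to some server, Uniform Agreement guarantees every correct server in $L$ eventually reaches $|S_i|\ge\ell_{last}$, so guarded gets are eventually answered and every operation terminates.
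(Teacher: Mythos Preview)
Your proposal is correct and follows essentially the same approach as the paper: both arguments reduce sequential consistency to the same four per-process ordering cases (append--append via Uniform Total Order, append--get via the $\ell_{last}$ guard, get--append via Uniform Integrity, get--get via the monotone length check) plus a liveness argument for the guarded \textbf{wait}. Your version is somewhat more explicit in actually constructing the permutation $\sigma$ and verifying it meets the sequential specification, whereas the paper states properties \textbf{S1}--\textbf{S4} and leaves the existence of $\sigma$ implicit; one small imprecision (shared with the paper) is that the returned $pos$ may in principle exceed the global index of $r_1$ when a duplicate broadcast is delivered after intervening records, but since your argument only needs $pos \ge \text{index}(r_1)$ this is harmless.
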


\begin{proof}
	Sequential consistency is preserved if we can show that in any history $\hist{\ledger}$, 
	the following properties are satisfied for any process $\pr$ : 
	\begin{enumerate}
		\item[\textbf{S1}] if $\act{append}_\pr(r_1) \bef \act{append}_\pr(r_2)$ then $r_1$ must appear before $r_2$ in 
		the sequence of the ledger.
		\item[\textbf{S2}] if $\op=\act{get}_\pr() \bef \act{append}_\pr(r_1)$, then $\op$ returns a sequence $V_\pr$ that does not contain $r_1$ 
		\item[\textbf{S3}] if $\act{append}_\pr(r_1) \bef \act{get}_\pr()=\op$, then $\op$ returns a sequence $V_\pr$ that contains $r_1$ 
		\item[\textbf{S4}] if $\op_1$ and $\op_2$ are two $\act{get}_\pr()$ operations, s.t. $\op_1 \bef \op_2$,
		that return sequences $V_1$ and $V_2$ respectively, then
		$V_1$ must be a prefix of $V_2$.
	\end{enumerate}
	
	Property \textbf{S1} is preserved as $p$ waits for a response to the $\act{append}_p(r_1)$ operation before 
	invoking the $\act{append}_p(r_2)$ operation. Let server $s$ be the first to respond to the $\act{append}_p(r_1)$.
	Server $s$ sends a response only if $(c_1,r_1)$ is delivered at 
	$s$. 
	Since $r_2$ was not broadcasted before the $\act{ADeliver}(c_1, r_1)$ at $s$ then by the \emph{Uniform Total Order} 
	of the atomic broadcast,
	every server will receive $r_1$ before receiving $r_2$. Thus, the $\act{ADeliver}(c_2, r_2)$ occurs in a server $s'$
	after the appearance of $\act{ADeliver}(c_1, r_1)$. Hence, if $s'$ is the first server to reply to $p$ for $\act{append}_p(r_2)$,
	it must be the case that $s'$ added $r_1$ in his ledger sequence before adding $r_2$.
	
	Property \textbf{S2} is also preserved because $\pr$ waits for $\act{get}_\pr()$ to complete before invoking $\act{append}_\pr(r_1)$. 
	Since $r_1$ is broadcasted to the servers after $\act{get}_\pr()$ is completed, then by \emph{Uniform Integrity},
	it is impossible that record $r_1$ was included in the sequence returned by the $\act{get}_\pr()$ operation.
	
	Property \textbf{S3} holds given that a server responds to the get operation of $p$ only if its 
	local sequence is longer than the position where the last record $r$ of $p$ was appended.
	In particular, from Code~\ref{code:interface-distributed-sqc}, the $\act{append}_p(r)$ operation 
	completes if a server $s$ delivers the $(c,r)$ message and responds to $p$. Within the response,
	$s$ encloses the length $\ell$ of his local sequence once appended with $r$. When $p$ invokes the 
	$get_p()$ operation, it encloses $\ell$ within its {\sc get} request. Let $s'$ be the first server that 
	replies to the  $get_p()$ with a sequence $V_{s'}$. From Code~\ref{impl:sqc}, it follows that $s'$ 
	responds to $p$ only when $|V_{s'}|\geq\ell$. By the \emph{Uniform Total Order} it follows that 
	all record messages are delivered in the same order in every server (including $s'$). Hence, 
	if $r$ was appended at location $\ell$ in $s$, then $r$ appears in position $\ell$ in the sequence
	of $s'$ as well. Since $|V_{s'}|\geq\ell$ then it must be the case that $V_{s'}$ contains $r$ 
	at position $\ell$. So $get_p()$ returns a sequence that contains $r$ as needed. 
	
	%
	
	Similarly, Property \textbf{S4} holds as a get operation stores the length of the obtained sequence 
	and the next get from the same process passes this length along with its {\sc Get} request. With 
	similar reasoning as in the proof of property \textbf{S3}, if $\op_1$ and $\op_2$ are two $\act{get}_p()$ 
	operations from the same process $p$, s.t. $\op_1\bef\op_2$, and return $V_1$ and $V_2$ respectively,
	then it must be the case that $|V_1|\leq|V_2|$. From the \emph{Uniform Total Order} the appends are received 
	in the same order in every server, and hence $V_1$ is a prefix of $V_2$.

	
	Hence, the only pending issue is to prove that the wait instruction in Line 6 of Code~\ref{code:interface-distributed-sqc} will eventually terminate. This means that some server in $L$ must return
	a sequence $V$ that is longer or equal to the latest sequence, say with length $\ell$, 
	received at $p$. There are two cases to consider: (i) either $p$ received $\ell$ as a result of 
	an append operation, or (ii) $p$ received a sequence with length $\ell$ as a result of a get operation. 
	Let server $s$ be the one to respond to $p$ with a length $\ell$ or with a sequence of length $\ell$.
	It follows that $\ell$ records are delivered to $s$. By \emph{Uniform Agreement}, these records 
	will be delivered to all correct processes. Since there will be at least a single server $s'$ to be correct
	at any $\act{get}_p()$ operation of $p$, then $s'$ will eventually receive all the records and will respond to $p$.
\end{proof}

\section{Validated Ledgers}
\label{sec:Valid}

\begin{figure}[t]
	\begin{multicols}{2}
		\begin{algorithm}[H]
			\caption{\small Validated Ledger $\vledger$ (centralized)}
			\label{alg:vdl}
			\begin{algorithmic}[1]
				\State \textbf{Init}: $S \leftarrow \emptyset$ 
				\Receive{{\sc get}} 
				\State \indent\textbf{send} response ({\sc getRes}, $S$) to $p$ 
				\EndReceive
				\Receive{{\sc append}, $r$}{}
				\If{$\mathit{Valid}(S  \extends r)$}
				\State $S \leftarrow S  \extends r$
				\State \textbf{send} response ({\sc appendRes}, {\sc ack}) to $p$
				\Else ~~\textbf{send} response ({\sc appendRes}, {\sc nack}) to $p$
				\EndIf
				\EndReceive
			\end{algorithmic}
		\end{algorithm}
						
\begin{algorithm}[H]
	\caption{\small Validated Ledger $\vledger$ Implemented with a Ledger $\ledger$}
	\label{alg:val2dl}
	\begin{algorithmic}[1]
		\State \textbf{Declare} $\ledger$: ledger 
		\Receive{{\sc get}}
		\State $S \leftarrow \emptyset$
		\State $V \leftarrow \ledger.\act{get}()$
		\State \textbf{foreach record} $r\in V$ do
		\State \indent \textbf{If~}{$\mathit{Valid}(S  \extends r)$} \textbf{then} $S \leftarrow S  \extends r$
		\State \textbf{send} response ({\sc getRes}, $S$) to $p$ 
		\EndReceive
		\Receive{{\sc append}, $r$}{}
		\State $\mathit{res} \leftarrow \ledger.\act{append}(r)$
		\State \textbf{send} response ({\sc appendRes}, $\mathit{res}$) to $p$
		\EndReceive
	\end{algorithmic}
\end{algorithm}
\end{multicols}\vspace{-1em}
\end{figure}


A \emph{validated ledger} $\vledger$ is a ledger in which specific semantics are imposed on the contents of the records stored in the ledger. For instance, if the records are (bitcoin-like) financial transactions, the semantics should, for example, prevent double spending, or apply other transaction validation used as part of the Bitcoin protocol \cite{N08bitcoin}. 
The ledger preserves the semantics with a validity check in the form of a Boolean function $\mathit{Valid}()$ that takes as an input a sequence of records $S$ and returns $\mathit{true}$ if and only if the semantics are preserved. In a validated ledger the result of an $\act{append}_\pr(r)$ operation may be {\sc nack} if the validity check fails. Code \ref{alg:vdl} presents a centralized implementation of a
validated ledger $\vledger$. 
	The $\mathit{Valid}()$ function is similar to the one used to check validity in \cite{DBLP:journals/corr/CrainGLR17} or the external validity 
in \cite{DBLP:conf/crypto/CachinKPS01}, but these latter are used
in the consensus algorithm to prevent an invalid value to be decided. In Code \ref{alg:vdl} an invalid record is locally detected and discarded.


The sequential specification of a validated ledger must take into account the possibility that an append returns {\sc nack}. To this respect, property (2) of Definition~\ref{def:sspec} must be revised
as follows:

\begin{definition}
	\label{def:sspecv}
	The sequential specification of a {\bf validated} ledger $\vledger$ over the sequential history $\hist{\vledger}$ is defined as follows.
	The value of the sequence $\vledger.S$ is initially the empty sequence. If at the invocation event of an operation $\op$ in $\hist{\vledger}$ the
	value of the sequence in ledger $\vledger$ is $\vledger.S=V$, then:
	\begin{enumerate}
		\item[] \textbf{1.} if $\op$ is a $\vledger.\act{get}_\pr()$ operation, then the response event of $\op$ returns $V$, 
		\item[] \textbf{2(a).} if $\op$ is an $\vledger.\act{append}_\pr(r)$ operation that returns {\sc ack}, 
		then $\mathit{Valid}(V\extends r) = \mathit{true}$ and at the response event of $\op$, the value of the sequence in ledger $\vledger$ is $\vledger.S=V\extends r$, and
		\item[] \textbf{2(b).} if $\op$ is a $\vledger.\act{append}_\pr(r)$ operation that returns {\sc nack}, 
		then $\mathit{Valid}(V\extends r) = \mathit{false}$ and at the response event of $\op$, the value of the sequence in ledger $\vledger$ is $\vledger.S=V$.
	\end{enumerate}\vspace{-1em}
\end{definition}	

Based on this revised notion of sequential specification, one can define the eventual, sequential and atomic consistent validated distributed
ledger and design implementations in a similar manner as in Section~\ref{sec:DL}.
%

%

It is interesting to observe that a validated ledger $\vledger$ can be implemented with a regular ledger $\ledger$ if we are willing to waste some resources and accuracy (e.g., not rejecting invalid records). 
Figure~\ref{fig:vledger} illustrates this modular implementation of a validated ledger on top of a regular ledger. In particular, processes can use a ledger $\ledger$ to store \emph{all} the records appended, even if they make the validity to be broken. Then, when the function $\act{get}()$ is invoked, the records that make the validity to be violated are removed, and only the valid records are returned. The algorithm is presented in Code~\ref{alg:val2dl}. This algorithm does not check validity in a $\op=\act{append}(r)$ operation, which hence always returns {\sc ack}, because it is not possible to know when $\op$ is processed the final position $r$ will take in the ledger (and hence to check its validity).
Interestingly, as mentioned above, a similar approach has been used in Hyperledger Fabric~\cite{DBLP:conf/eurosys/AndroulakiBBCCC18} for the same reason.

\begin{figure}
	\centering
	\includegraphics[width=3.4in]{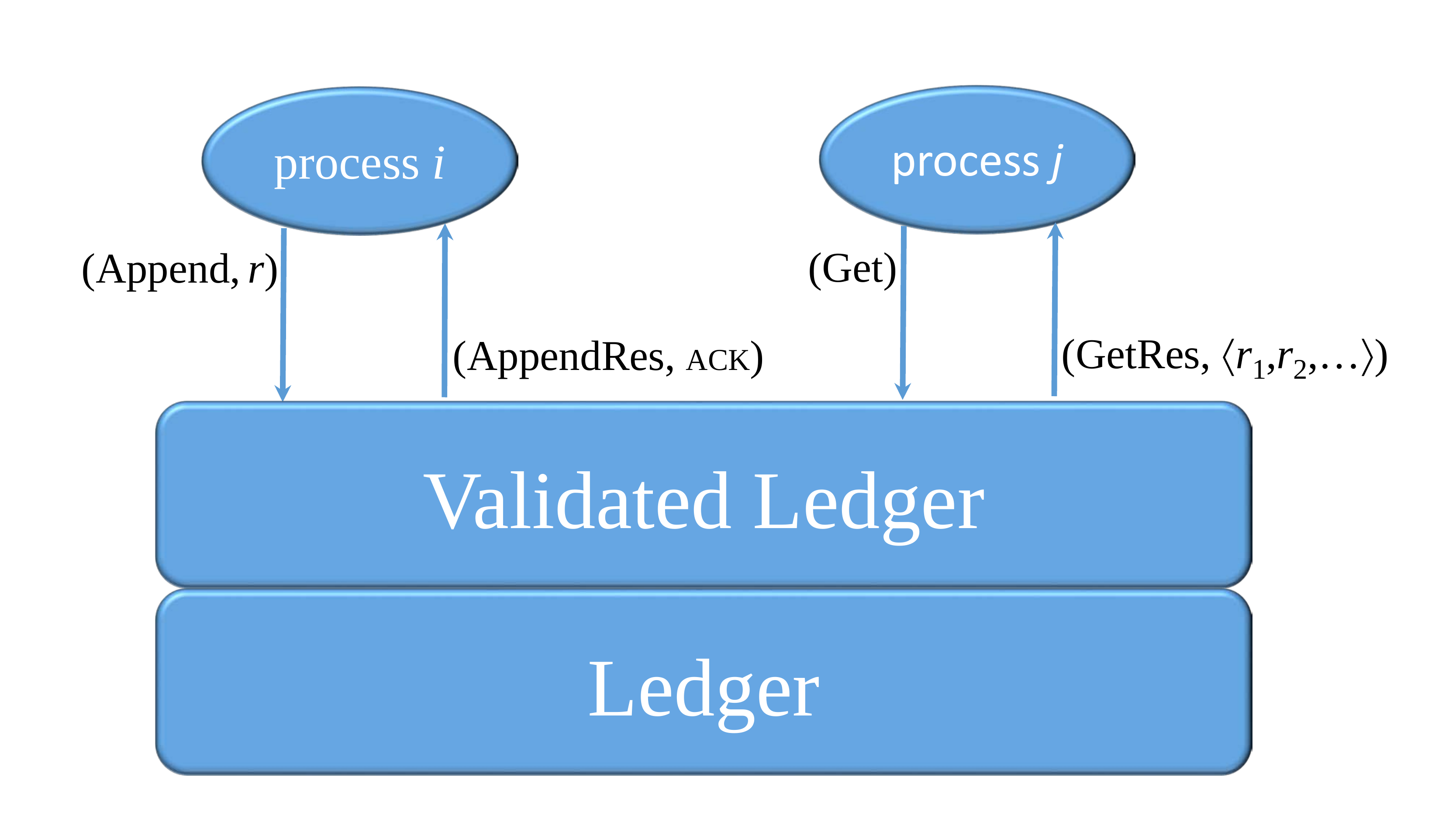}\vspace{-1em}
	\caption{\small Modular implementation of a validated ledger, where $r,r_1,r_2,\ldots$ are records.}
	\label{fig:vledger}\vspace{-1em}
\end{figure}
%
%
%

\section{Conclusions}
\label{sec:conclude}
In this paper we formally define the concept of a distributed ledger object with and without validation. We have 
focused on the definition of the basic operational properties that a distributed ledger must satisfy, and their
consistency semantics, independently
of the underlying system characteristics and the failure model.
Finally, we have explored implementations of fault-tolerant distributed ledger objects with different types of consistency in 
crash-prone systems augmented with an atomic broadcast service. Comparing the distributed ledger object and its consistency models with popular existing
blockchain implementations, like Bitcoin or Ethereum, we must note that these
do not satisfy even eventual consistency. Observe that their blockchain may (temporarily)
fork, and hence two clients may see (with an operation analogous to our get) two
conflicting sequences, in which neither one is a prefix of the other. This violates the sequential
specification of the ledger. The main issue with these blockchains is that they
use probabilistic consensus, with a recovery mechanism when it fails.

As mentioned, this paper is only an attempt to formally address the many questions that were posed in the introduction. In that sense we have only scratched the surface.
There is a large list of pending issues that can be explored. For instance, we believe that the implementations we have can be adapted to deal with Byzantine
failures if the appropriate atomic broadcast service is used. However, dealing with Byzantine failures
will require to use cryptographic tools. Cryptography was not needed in the implementations
presented in this paper because we assumed benign crash failures. Another extension worth exploring is how to deal with highly dynamic sets of possibly anonymous servers in order to implement
distributed ledgers, to get closer to the Bitcoin-like ecosystem. In a more ambitious but possibly related tone, we would like to fully explore the properties of validated ledgers and their relation with cryptocurrencies.

\subsection*{Acknowledgements} We would like to thank Paul Rimba and Neha Narula for helpful discussions. 
This work was supported in part by the University of Cyprus (ED-CG 2017), the Regional Government of Madrid (CM) grant Cloud4BigData (S2013/ICE-2894), cofunded by FSE \& FEDER, and the NSF of China grant 61520106005.

\bibliographystyle{acm}
\bibliography{biblio}

\end{document}